\def\Title{Reconstructing Bohr's Reply to EPR 
in Algebraic Quantum Theory}
\def\AuthorA{Masanao Ozawa\thanks{email: ozawa@is.nagoya-u.ac.jp}}
\def\AffiliationA{\small\it Graduate School of Information Science, Nagoya University,
Nagoya 464-8601,\\  \small\it Japan}
\def\AuthorB{Yuichiro Kitajima\thanks{email: kitajima.yuichirou@nihon-u.ac.jp}}
\def\AffiliationB{\small\it College of Industrial Technology, Nihon University,
\small\it  2-11-1 Shin-ei, Narashino,\\ \small\it Chiba 275-8576, Japan}
\def\Abstract{
Halvorson and Clifton have given a mathematical reconstruction
of Bohr's reply to Einstein, Podolsky and Rosen (EPR), and argued that 
this reply is dictated by the two requirements of classicality and objectivity 
for the description of experimental data, 
by proving consistency between their objectivity requirement and 
a contextualized version of the EPR reality criterion which had been introduced 
by Howard in his earlier analysis of Bohr's reply.
In the present paper, we generalize the above consistency theorem, 
with a rather elementary proof, to a general formulation of EPR states 
applicable to both non-relativistic 
quantum mechanics and algebraic quantum field theory; and we clarify
the elements of reality in EPR states in terms of Bohr's requirements of classicality
and objectivity, in a general formulation of algebraic quantum theory.
}
\def\Keywords{Bohr, EPR, algebraic quantum theory, modal interpretation}
  \newtheorem{theorem}{Theorem} 
  \newtheorem{proposition}{Proposition} 
  \newcommand{\beq}{\begin{equation}}
  \newcommand{\eeq}{\end{equation}}
  \newcommand{\beql}[1]{\begin{equation}\label{eq:#1}}
  \newcommand{\beqa}{\begin{eqnarray}}
  \newcommand{\eeqa}{\end{eqnarray}}
  \newcommand{\beqas}{\begin{eqnarray*}}
  \newcommand{\eeqas}{\end{eqnarray*}}
  \newcommand{\C}{{\bf C}}
  \newcommand{\R}{{\bf R}}
  \newcommand{\al}{\alpha}
  \newcommand{\ch}{\chi}
  \newcommand{\om}{\omega}
  \newcommand{\rh}{\rho}
  \newcommand{\vp}{\varphi}
  \newcommand{\Om}{\Omega}
  \newcommand{\Eq}[1]{Eq. (\ref{eq:#1})}
\newcommand{\bracket}[1]{\langle#1\rangle}
  \newcommand{\bB}{{\bf B}}
  \newcommand{\bS}{{\bf S}}
  \newcommand{\cA}{{\cal A}}
  \newcommand{\cB}{{\cal B}}
  \newcommand{\cF}{{\cal F}}
  \newcommand{\cH}{{\cal H}}
  \newcommand{\cO}{{\cal O}}
  \newcommand{\cS}{{\cal S}}
  \newcommand{\cU}{{\cal U}}
  \newcommand{\hA}{{\hat A}}
  \newcommand{\hB}{{\hat B}}
  \newcommand{\hP}{{\hat P}}
  \newcommand{\hQ}{{\hat Q}}
\newcommand{\OO}{\cO}
\newcommand{\CC}{C^*}
\newcommand{\AP}{{\rm AP}}
\newcommand{\mA}{\mathfrak{A}}
\newcommand{\mB}{\mathfrak{B}}
\newcommand{\mN}{\mathfrak{N}}
\renewcommand{\cA}{\mA}
\renewcommand{\cB}{\mB}
\renewcommand{\bB}{\mathbb{B}}
\newcommand{\vpi}{\varpi}
\renewcommand{\rho}{\varphi}
\title{\bf \Title}
\author{\sc \AuthorA \\ 
\AffiliationA\\ \\
\sc \AuthorB \\ 
 \AffiliationB
}
\date{}
\begin{document}
\maketitle
\begin{abstract}
\Abstract\\ \\
KEYWORDS: \Keywords\\
\end{abstract}



\section{Introduction}
Einstein, Podolsky and Rosen (EPR) \cite{EPR35} discussed a system consisting of
two particles. They have interacted initially and then moved apart so that the positions, 
and the momenta,  of the two particles are strictly correlated.
It follows that if one were to measure the position of the first particle, one could predict with certainty the outcome of a position measurement on the second particle; and similarly for a momentum measurement.
EPR proposed a ``criterion of reality'': 
\begin{quote}
If, without in any way disturbing a system, we can predict with certainty (i.e., with probability equal to unity) the value of a physical quantity, then there exists an element of physical reality corresponding to this physical quantity.
\cite[p.777]{EPR35}
\end{quote}
In accordance with this criterion, the position and the momentum of the second particle have simultaneous reality since the measurement on the first particle has not disturbed the second particle. On the other hand, position and momentum cannot 
have simultaneous reality in any states in  
quantum mechanics. Therefore, EPR regarded quantum-mechanical description as incomplete. 

In reply to EPR,
Bohr pronounced that a measurement on the first particle influences the condition which defines elements of reality for the second particle, 
so that he rejected EPR's conclusion.
\begin{quote}
[T]here is in a case like that just considered no question of a mechanical disturbance of the system under investigation during the last critical stage of the measuring procedure. But even at this stage there is essentially the question of \textit{an influence on the very conditions which define the possible types of predictions regarding the future behavior of the system}. Since these conditions constitute an inherent element of the description of any phenomenon to which the term ``physical reality'' can be properly attached, we see that the argumentation of the mentioned authors [EPR] does not justify their conclusion that quantum-mechanical description is essentially incomplete. \cite[p.700]{Boh35}
\end{quote}

Unfortunately, Bohr's view has prevailed under the Copenhagen interpretation
with several improper restatements, and so there have been only a few serious attempts 
to reconstruct his reply in a rigorous analysis.
Here, we shall discuss Howard's original contribution \cite{How79,How94}
to the reconstruction of Bohr's reply to EPR and its mathematical reformulation due to 
Halvorson and Clifton \cite{HC02}.

Howard \cite{How94} reconstructed Bohr's reply to EPR 
in terms of Bohr's notion of  ``classical description'' and ``a contextualized
version of the EPR reality criterion.''
Bohr stated his view on the classicality requirement of description 
of experimental data as follows.
\begin{quote}
[I]t is decisive to recognize that, \textit{however far the phenomena transcend the scope of classical physical explanation, the account of all evidence must be expressed in classical terms}. The argument is simply that by the word ``experiment'' we refer to a situation where we can tell others what we have done and what we have learned and that, therefore, the account of the experimental arrangement and of the results of the  observations must be expressed in unambiguous language with suitable application of the terminology of classical physics. 
\cite[p.209]{Boh49}
\end{quote}
To explicate this, Howard \cite{How79,How94} considered 
a ``measurement context'' $(\psi,R)$, 
consisting of  a state vector $\psi$ and an observable $R$,
in which the observable $R$ is to be measured in the state $\psi$.
He asked which observables should be considered to possess their values
as elements of reality,
and claimed that Bohr's notion of classical description is best explicated 
via the notion of  ``appropriate mixture'' for the measurement context 
$(\psi,R)$ under consideration
 \cite[pp.190--199]{Bub97}.
\begin{quote}
My claim about the nature of a classical description is that Bohr did not mean simply the application of classical physics --- the physics of Newton, Maxwell, Boltzmann, and Einstein --- in some combination appropriate to the occasion. I will argue instead that a classical description, in the sense of 
``classical'' relevant to Bohr's concerns, is a description in terms of what physicists call ``mixtures'' (as opposed to what are termed ``pure cases''), a formal device that permits us to proceed as if the systems being described were in well-defined, if unknown, intrinsic states, at least with respect to those properties requiring a classical description. The device of mixtures also permits one to give a classical, ignorance interpretation to any statistics that one encounters. 
Which mixture to employ in a given classical description will depend upon the kind of measurement being performed, the ``appropriate mixture'' being one constructed out of simultaneous eigenstates of all the observables measurable in a given experimental context. 
\cite[p.203--204]{How94} 
\end{quote}
Halvorson and Clifton \cite[p.13]{HC02}
showed that Howard's notion of ``appropriate mixture''
can be formulated generally in algebraic quantum theory as a beable
subalgebra \cite{HC99} for a given measurement context $(\psi,R)$.

However, this notion is not sufficient to select one appropriate beable
subalgebra from other candidates. 
For this purpose, 
Howard \cite{How79,How94} introduced 
 ``a contextualized version of the EPR reality criterion.'' 
\begin{quote} 
Once the experimental context is stipulated, which amounts to the
specification of the candidates for real status, our decision as to
which particular properties to consider as real will turn on the
question of predictability with certainty.  \cite[p.256]{How79}
\end{quote}
Halvorson and Clifton \cite[pp.14--15]{HC02}
turned down Howard's contextualized EPR reality criterion\footnote{
The reason for their rejection of Howard's contextualized version of 
the EPR reality criterion is obscure.
They argued that the EPR reality criterion is best construed 
as a version of
`inference to the best explanation' following Redhead \cite[p.72]{Red87}
and that we cannot expect Bohr to be persuaded by such inference to the
best explanation since he is not a classical scientific realist \cite[p.11]{HC02}.
This is plausible, and yet it is not clear whether Bohr would contend
against the contextualized version either.}. 
For their interpretation of Bohr's notion of objectivity, they introduced the requirement
of invariance under the `relevant' symmetries 
for the set of observables corresponding 
to elements of reality.
Then they proved the equivalence between Howard's contextualized EPR
reality criterion and the above invariance principle
 for the case of Bohm's simplified spin version of the EPR experiment
 \cite[Theorem 1]{HC02}.
For the EPR position-momentum case 
they proved the consistency between those two requirements
\cite[Theorem 2]{HC02}.

In the present paper, we follow an abstract definition of the EPR state
for a pair of observables due to Werner \cite{Wer99};
and we first show, in Section 2, that such a state exists 
for incommensurable pairs of observables 
in both nonrelativistic quantum mechanics 
and algebraic quantum field theory (Theorem \ref{theorem1} of Section 2). 
Then it is shown that ``contextualized version of the EPR reality criterion'' is consistent with the requirements of Halvorson and Clifton for the case of Bohm's simplified spin version of the EPR experiment, the EPR position-momentum case, and the case of algebraic quantum field theory (Theorems \ref{theorem3},  \ref{theorem5}, and \ref{theorem9}).

\section{EPR states}

In this section, we provide an abstract definition of  EPR states
and show that such a state exists in both nonrelativistic quantum mechanics 
and algebraic quantum field theory (Theorem \ref{theorem1}).

Let $\cA$ be a unital C*-algebra.
A {\em commuting pair}  in $\cA$ is a pair $(A,B)$ of 
commuting self-adjoint elements $A, B\in \cA_{sa}$.
A state $\vp$ of $\cA$ is called an {\em EPR state} 
\cite{AV00,Wer99}
for a commuting
pair $(A,B)$ if $\vp((A-B)^2)=0$.
For any commuting pair $(A,B)$,  the {\em joint probability distribution} 
$\mu^{A,B}_{\vp}$ of $A,B$ in $\vp$ is defined uniquely
to be a probability
measure $\mu$ on $\R^2$ such that
\beql{JPD}
\vp(f(A,B))=\int_{\R^{2}}f(x,y)\,d\mu^{A,B}_{\vp}(x,y)
\eeq
for any polynomial $f(A,B)$.
Then, $\vp$ is an EPR state if and only if $\mu^{A,B}_{\vp}$
is concentrated on the diagonal, i.e, 
$\mu^{A,B}_{\vp}(\{(x,x)|\ x\in\R\})=1$.
Thus, simultaneous measurements of $A$ and $B$ always give
concordant results, and each one of the outcomes would predict with certainty 
the other.
In Ref.~\cite{06QPC} one of the present authors introduced a general notion 
of  ``quantum perfect correlation'', which can be applicable to any pair of 
not necessarily bounded observables $A,B$ that do not necessarily 
commute, and showed that this notion can be equivalently characterized
by several mathematical conditions, including the condition that they 
have the joint probability distribution concentrated on the diagonal;
a quantum logical reconstruction of this notion has been given in 
Ref.~\cite{11QRM}.  According to the above, 
an EPR state $\vp$ for $(A,B)$ realizes a ``quantum perfect correlation'' 
between commuting $A$ and $B$ in the state $\vp$.

Let $\cA_1$ and $\cA_2$ be mutually commuting C*-subalgebra
of $\cA$.
Commuting pairs $(A_1,A_2)$ and $(B_1,B_2)$ in $\cA$ are called
{\em incommensurable pairs} from $(\cA_1,\cA_2)$ for a state $\vp$ of $\cA$
if $A_j,B_j\in\cA_j$ and there are integers $n_j,m_j$ such that 
$\vp(|[A_j^{n_j},B_j^{m_j}]|^2)\not=0$ for $j=1,2$, 
where $[A, B]$ denotes the commutator of $A$ and $B$, i.e.,
$[A,B]=AB-BA$.
Here, we note that the incommensurability here is related to the 
state-dependent notion of non-commutativity.  
We say that observables $A$ and $B$ {\em commute in a state $\vp$} 
if there exists a joint probability distribution $\mu^{A,B}_{\vp}$ of $A, B$ 
in $\vp$ satisfying \Eq{JPD} for any polynomial $f(A,B)$.
Then, it can be seen that $A$ and $B$ commute in $\vp$ if and only
if $\vp(|[A^n,B^m]|^2)=0$ for any integers $n,m$ (See, for instance, Ref.~\cite{06QPC}).
Thus, commuting pairs $(A_1,A_2)$ and $(B_1,B_2)$ from $(\cA_1,\cA_2)$ 
are incommensurable for $\vp$ if and only if $A_j$ and $B_j$ 
do not commute in $\vp$ for $j=1,2$.

\sloppy
Let $\vp$ be an EPR state for two incommensurable pairs 
$(A_1,A_2)$ and $(B_1,B_2)$. Then
$\mu^{A_1,A_2}_{\vp}(\{(x,x)|\ x\in\R\})=1$
and $\mu^{B_1,B_2}_{\vp}(\{(x,x)|\ x\in\R\})=1$.
It follows that if we were to measure $A_1$ in $\rho$, we could predict with certainty the outcome of $A_2$; and if we were to measure $B_1$ in $\rho$, we could predict with certainty the outcome of $B_2$.

Let us consider Bohm's spin version of the EPR experiment. Suppose that we have prepared an ensemble of spin-1/2 particles in the singlet state
\[ \Psi=\frac{1}{\sqrt{2}} \left( \begin{pmatrix} 1 \\ 0 \end{pmatrix} \otimes \begin{pmatrix} 0 \\ 1 \end{pmatrix} - \begin{pmatrix} 0 \\ 1 \end{pmatrix} \otimes \begin{pmatrix} 1 \\ 0 \end{pmatrix} \right), \]
and let $\vp$ be the vector state of $\mathbb{B}(\mathcal{H}_2) \otimes \mathbb{B}(\mathcal{H}_2)$ induced by $\Psi$, where $\mathbb{B}(\mathcal{H}_2)$ is the space of operators on 
the 2-dimensional Hilbert space $\mathcal{H}_2=\C^2$. 
Then, 
$\vp$ is an EPR state for incommensurable pairs 
\[ \left( \begin{pmatrix} 1 & 0 \\ 0 & 0 \end{pmatrix} \otimes I, I \otimes \begin{pmatrix} 0 & 0 \\ 0 & 1 \end{pmatrix} \right) \]
and
\[ \left( \frac{1}{2} \begin{pmatrix} 1 & 1 \\ 1 & 1 \end{pmatrix} \otimes I, I \otimes \frac{1}{2} \begin{pmatrix} 1 & -1 \\ -1 & 1 \end{pmatrix} \right)\]
from $(\bB(\cH_2)\otimes \C I,\C I\otimes \bB(\cH_2))$.

In the rest of this section, we shall consider algebraic quantum field theory. 
In algebraic quantum field theory, each bounded open region $\mathcal{O}$ in Minkowski space is associated with a von Neumann algebra $\mathfrak{N}(\mathcal{O})$. Such a von Neumann algebra is called a {\em local algebra}. 
We say that bounded open regions $\mathcal{O}_1$ and $\mathcal{O}_2$ are 
{\em strictly space-like separated} 
if there is a neighborhood $\mathcal{V}$ of the origin of Minkowski space 
such that $\mathcal{O}_1 + \mathcal{V}$ and $\mathcal{O}_2$ are space-like separated.

In the present paper, we make the following assumptions. 
{\em For any bounded open region $\mathcal{O}$ in Minkowski space, $\mathfrak{N}(\mathcal{O})$ is not abelian. 
If $\mathcal{O}_1$ and $\mathcal{O}_2$ are space-like separated, 
then $[A_1,A_2]=0$ 
for any $A_1 \in \mathfrak{N}(\mathcal{O}_1)$ and $A_2 \in \mathfrak{N}(\mathcal{O}_2)$. If $\mathcal{O}_1$ and $\mathcal{O}_2$ are strictly space-like separated, then $A_1A_2 \neq 0$ for any nonzero operators $A_1 \in \mathfrak{N}(\mathcal{O}_1)$ and $A_2 \in \mathfrak{N}(\mathcal{O}_2)$} 
(cf. \cite[Theorem 1.12.3]{Bau95}).

The following theorem shows that there is an EPR state of $\mathfrak{N}(\mathcal{O}_1) \vee \mathfrak{N}(\mathcal{O}_2)$ for incommensurable pairs from $(\mN(\cO_1),\mN(\cO_2))$ 
if $\mathcal{O}_1$ and $\mathcal{O}_2$ are strictly space-like separated, where $\mathfrak{N}(\mathcal{O}_1) \vee \mathfrak{N}(\mathcal{O}_2)$ is the von Neumann algebra generated by $\mathfrak{N}(\mathcal{O}_1)$ and $\mathfrak{N}(\mathcal{O}_2)$.

\begin{theorem}
\label{theorem1}
Let $\mathfrak{N}_1$ and $\mathfrak{N}_2$ be mutually commuting 
non-abelian von Neumann algebras on a Hilbert space $\mathcal{H}$ 
such that $A_1A_2 \neq 0$ 
for any nonzero operators $A_1 \in \mathfrak{N}_1$ and $A_2 \in \mathfrak{N}_2$. 
Then, there exists 
a vector state $\vp$ of $\mathfrak{N}_1 \vee \mathfrak{N}_2$ 
which is an EPR state for incommensurable
pairs $(E_1,E_2)$ and $(F_1,F_2)$ of projections from 
$(\mathfrak{N}_1, \mathfrak{N}_2)$.
\end{theorem}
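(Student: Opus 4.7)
The plan is to reduce the problem to the displayed spin-1/2 singlet case by embedding, inside each $\mN_j$, a copy $\cA_j$ of the $2\times 2$ matrix algebra $M_2(\C)$, and then realizing a singlet vector for $\cA_1\vee\cA_2\cong M_4(\C)$ on an invariant subspace of $\cH$. To produce $\cA_j\subseteq\mN_j$, note that non-abelianness furnishes, via the spectral theorem applied to a pair of non-commuting self-adjoint elements, two non-commuting projections $P_j,Q_j\in\mN_j$. The off-diagonal element $X_j:=P_jQ_j(I-P_j)\in\mN_j$ is nonzero, since the decomposition $P_jQ_j=P_jQ_jP_j+X_j$ shows that self-adjointness of $P_jQ_j$ is equivalent to $[P_j,Q_j]=0$. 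Polar-decomposing $X_j$ inside $\mN_j$ yields a partial isometry $V_j\in\mN_j$ with $V_j=P_jV_j(I-P_j)$, so the initial and final projections $e_j^{(i)}:=V_j^*V_j$ and $e_j^{(f)}:=V_jV_j^*$ are nonzero, mutually orthogonal, and lie in $\mN_j$. Then $\cA_j:=\mathrm{span}\{e_j^{(i)},e_j^{(f)},V_j,V_j^*\}\subset\mN_j$ is a copy of $M_2(\C)$ with unit $p_j:=e_j^{(i)}+e_j^{(f)}\in\mN_j$, and inside $\cA_j$ I select $E_j,F_j$ as the projections corresponding under the chosen isomorphism to those used in the spin-1/2 example displayed above.

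By the Schlieder-type hypothesis and the commutation $[p_1,p_2]=0$, the product $p_1p_2$ is a nonzero projection in $\mN_1\vee\mN_2$. The multiplication map $\cA_1\otimes\cA_2\to\cA_1\cA_2$ is a surjective unital homomorphism sending $p_1\otimes p_2$ to $p_1p_2\neq 0$; since $\cA_1\otimes\cA_2\cong M_4(\C)$ is simple, this map is an isomorphism. Restricted to the invariant subspace $p_1p_2\cH\neq\{0\}$, one obtains a unital representation of $M_4(\C)$, which by the standard classification of representations of type~I$_4$ factors splits as
\begin{equation*}
p_1p_2\cH\cong\C^2\otimes\C^2\otimes\cK
\end{equation*}
for some Hilbert space $\cK$, with $\cA_jp_1p_2$ acting as $M_2(\C)$ on the $j$-th tensor factor and trivially on the other two.

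Fix any unit vector $\chi\in\cK$, let $\Phi_s\in\C^2\otimes\C^2$ be the singlet, and set $\Psi:=\Phi_s\otimes\chi\in p_1p_2\cH\subseteq\cH$. The computation performed in the spin-1/2 example above goes through verbatim on the first two tensor factors, giving $(E_1-E_2)\Psi=0$ and $(F_1-F_2)\Psi=0$, so the induced vector state $\vp:=\omega_\Psi$ on $\mN_1\vee\mN_2$ is an EPR state for both pairs. Incommensurability reduces to the reduced-density-matrix identity $\vp(|[E_j,F_j]|^2)=\tfrac12\,\mathrm{Tr}(|[E_j,F_j]|^2)>0$, valid because the singlet has maximally mixed single-factor marginals and $[E_j,F_j]\neq 0$ inside its copy of $M_2(\C)$. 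The main obstacle is the first step: locating $\cA_j\cong M_2(\C)$ genuinely \emph{inside} $\mN_j$, so that the off-diagonal element and its polar decomposition never leave the algebra; this is exactly where the non-abelianness assumption is used. Once that is done, simplicity of $M_4(\C)$ automatically upgrades the Schlieder condition into injectivity of $\cA_1\otimes\cA_2\to\cA_1\cA_2$, and the singlet verification is imported unchanged from the displayed spin-1/2 example.
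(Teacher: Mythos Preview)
Your proof is correct and follows essentially the same strategy as the paper: both use Landau's construction to extract from each non-abelian $\mN_j$ a partial isometry with orthogonal initial and final projections, invoke the Schlieder hypothesis to guarantee a nonzero joint support, and then build a Bell-type vector on which the chosen projections are perfectly correlated and incommensurable. The only difference is presentational: the paper writes the vector explicitly as $\Psi=\tfrac{1}{\sqrt2}(\Psi_1+V^{*}W^{*}\Psi_1)$ with $\Psi_1\in VV^{*}WW^{*}\cH$ and checks the required identities by direct computation, whereas you invoke the representation theory of $M_4(\C)$ to factor $p_1p_2\cH\cong\C^2\otimes\C^2\otimes\cK$ and import the singlet verification from the displayed spin-$\tfrac12$ example.
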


\begin{proof}
Since  $\mathfrak{N}_1$ is not abelian, there exists projections $R$ and $S$ 
such that $[R,S] \neq 0$. 
Define $T\in \mathfrak{N}_1$ by $T= (I-R)SR$. 
Then $T \neq 0$. 
Let $T=V \vert T \vert$ be the polar decomposition of $T$. 
It follows from the construction of $T$
that $VV^*$ is orthogonal to $V^*V$ and $V^2=0$ 
(cf. \cite [Lemma]{Lan87}). 
Similarly, there exists a partial isometry $W \in \mathfrak{N}_2$ 
such that $WW^*$ is orthogonal to $W^*W$ and $W^2=0$. Let
$E_1 = VV^*$, 
$F_1 = \frac{1}{2}(VV^*+V^*V+V+V^*)$, 
$E_2 = WW^*$,  and 
$F_2 = \frac{1}{2}(WW^*+W^*W+W+W^*)$.
Then $F_1$ and $F_2$ are projections in $\mathfrak{N}_1$ and $\mathfrak{N}_2$, respectively.
Since $E_1 \in \mathfrak{N}_1$ and $E_2 \in \mathfrak{N}_2$, 
we have $E_1E_2 \neq 0$ by hypothesis. Let $\Psi_1$ be a unit vector such that 
$\Psi_1 \in E_1E_2\mathcal{H}$.
Since $VW(V^*W^*)\Psi_1=\Psi_1$, we have $V^*W^*\Psi_1 \neq 0$. Define $\Psi \in\cH$ by
$ \Psi = \frac{1}{\sqrt{2}}(\Psi_1+V^*W^*\Psi_1)$. 
Then $\Psi$ is a unit vector. Let $\vp$ be the vector state of 
$\mathfrak{N}_1 \vee \mathfrak{N}_2$ induced by  $\Psi$. Then
\[ \vp(E_1)=\vp(E_2)=\vp(E_1E_2)=\vp(F_1)=\vp(F_2)=\vp(F_1F_2)=\frac{1}{2}. \]
Since $E_1F_1 \Psi =\frac{1}{2\sqrt{2}}(\Psi_1+W^*\Psi_1)$, 
$F_1E_1\Psi=\frac{1}{2\sqrt{2}}\Psi_1$ and $W^* \Psi_1 \neq 0$, 
we have $\vp(|[E_1,F_1]|^2) \neq 0$. Similarly $\vp(|[E_2,F_2]|^2) \neq 0$.
Therefore, the vector state $\vp$ of $\mathfrak{N}_1 \vee \mathfrak{N}_2$ is 
an EPR state for the incommensurable pairs $(E_1,E_2)$ and $(F_1,F_2)$ from
$(\mathfrak{N}_1, \mathfrak{N}_2)$. 
\end{proof}

\section{Bohr's requirements of classicality and objectivity}

Halvorson and Clifton followed Howard's interpretation of Bohr's notion of 
``classical description,'' and they reformulated this notion in algebraic quantum
theory as follows \cite[p.13]{HC02}.
Let $\cB$ be a unital C*-subalgebra of a unital C*-algebra $\cA$
with a state $\vp$.
A state $\omega$ of $\mathfrak{B}$ is said to be dispersion-free 
if $\omega(A^*A)= \vert \omega(A) \vert^2$ for any $A \in \mathfrak{B}$.
We say that $\vp$ is {\em classical} on $\cB$ or 
that $\cB$ is {\em beable} for $\vp$
 if  there is  a probability
measure $\mu$ on the space $\cS_{DF}(\cB)$ of dispersion-free 
states of $\cB$ satisfying
\beql{101025}
\vp(A)=\int_{\cS_{DF}(\cB)}\om(A)d\mu(\om)
\eeq
for every $A\in\cB$.

A {\em measurement context} is defined to be a pair $(\vp,A)$ of 
 a state $\vp$ of $\cA$  and
a self-adjoint element $A\in\cA$.
If $A\in\mA$ is an observable being measured in state $\rho$, 
then $A$ must be contained in a beable subalgebra $\mB$ 
determined by the measurement context $(\vp,A)$,
since the outcome of the $A$ measurement must be described classically
by the classical probability distribution $\mu$ of possible values $\om(A)$
with $\om\in\cS_{DF}(\cB)$.
Then, the subalgebra $\cB$ contains only observables for which we can
speak of their values in terms of classical language.

As discussed in Section 1, we need some proposal to uniquely determine
an appropriate subalgebra ${\cB}$.
Howard's strategy to single out such a subalgebra is to collect all the 
observables whose value can be predictable from the value of $A$ 
without any further disturbance of the system.
This can be done if $\vp$ is an EPR state for $(A,B)$.
Thus, Howard's strategy is to collect all observables of the form $f(B)$ such that
$\vp$ is an EPR state for $(A,B)$.
This strategy was successful to reconstruct Bohr's reply 
in Bohm's modified spin version of the EPR argument. 
Halvorson and Clifton called the above doctrine a {\em contextualized 
version of the EPR reality criterion}, since that assigns an element
of reality to an observable the value of which can be predicted without
disturbing the system, provided it can be described by a classical language together with
the value of measured observable $A$.

Halvorson and Clifton \cite[pp.14--15]{HC02}
proposed an alternative approach to single out a suitable set of 
observables to possess their values in the given measurement context.
According to them,
a feature of a system is ``objective'' by 
being ``invariant under the `relevant' group of symmetries'' 
\cite[p.11]{HC02}, as follows.
Let $\mathfrak{B}$ be a unital C*-subalgebra of a unital C*-algebra 
$\mathfrak{A}$ with a self-adjoint element $A$ and a state $\vp$.
We say that $\mathfrak{B}$ is {\em definable in 
the measurement context $(\rho,A)$} 
if $U^*\mathfrak{B}U=\mathfrak{B}$ for any unitary $U\in\cA$
such that $[A,U]=0$ and $\vp_U=\vp$, where $\vp_U$ is defined
by $\vp_{U}(X)=\vp(U^*XU)$ for every $X\in\cA$.
Here, we have used a notion of implicit definability, in the
sense that $\mathfrak{B}$ can be considered to be implicitly defined
by $\vp$ and $A$ if the membership of $\mathfrak{B}$ is not affected
by any automorphism of $\cA$ that leaves $\vp$ and $A$ invariant;
this notion is widely used in foundational studies, for example, by
Malament \cite[p.297]{Mal77}.

Recall from Section 1 that Halvorson and Clifton  proved consistency 
between the above two approaches to single out an appropriate subset of
observables ---
Howard's contextualized version of the EPR reality criterion and
the invariance under the relevant group of symmetry ---
for the position-momentum case.
Now, we shall prove that the above two approaches are consistent even
in a general formulation of algebraic quantum theory.
To formulate the statement, we say that a unital C*-subalgebra $\cB$ of $\cA$ 
is a  {\em beable subalgebra} for
$(\vp,A)$ if it satisfies following conditions:
\begin{enumerate}
\item[(i)]  (Beable) $\vp$ is classical on $\cB$.
\item[(ii)] ($A$-Priv) $A\in\cB$.
\item[(iii)] (Def)  $\mathfrak{B}$ is definable in $(\rho, A)$.
\end{enumerate}

The following theorem shows that the beable for the measurement
context $(\rh,A)$ may include an observable perfectly correlated 
with $A$ but excludes observables non-commuting with observables
perfectly correlated with $A$. 

\begin{theorem}\label{th:101017}\label{theorem2}\label{th:consis}
Let $(A,B)$ be a pair of commuting
self-adjoint elements in a unital C*-algebra $\cA$.
Let $\vp$ be a state of $\cA$ and let $\cB$ be a C*-subalgebra of $\cA$.
If $\vp$ is an EPR state for $(A,B)$ and  
$\cB$ is beable for $(\vp,A)$, we have
$$
\vp(|[B^{n},Z]|^2)=0
$$
for all $Z\in\cB$ and $n=1,2,\ldots$.
\end{theorem}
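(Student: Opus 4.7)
Set $C := B^n - A^n$. I would reduce the claim to showing $\varphi(|[C, Z]|^2) = 0$, promote $[C,\cdot]$ to a bounded derivation of $\mathfrak{B}$ using the Def property, and then kill this derivation on the abelian quotient of $\mathfrak{B}$ forced by classicality.

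First, I unpack EPR in the GNS triple $(\pi_\varphi, \mathcal{H}_\varphi, \Omega_\varphi)$. From $\varphi((A-B)^2)=0$ and $[A,B]=0$, a short induction gives $\pi_\varphi(A^k)\Omega_\varphi = \pi_\varphi(B^k)\Omega_\varphi$ for every $k$; in particular $\pi_\varphi(C^k)\Omega_\varphi = \pi_\varphi(C)^k\Omega_\varphi = 0$ for every $k\geq 1$, and so $\varphi(XC)=\varphi(CX)=0$ for every $X\in\mathfrak{A}$. Next, classicality forces every dispersion-free state $\omega$ of $\mathfrak{B}$ to be multiplicative (the Cauchy--Schwarz equality in its own GNS gives $\pi_\omega(Y)\Omega_\omega=\omega(Y)\Omega_\omega$), so commutators in $\mathfrak{B}$ are annihilated by each $\omega$, and dispersion-freeness then upgrades this to $\varphi(|[Y_1,Y_2]|^2)=\int|\omega([Y_1,Y_2])|^2\,d\mu(\omega)=0$ for all $Y_1,Y_2\in\mathfrak{B}$. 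Taking $Y_1=A^n\in\mathfrak{B}$ (by $A$-Priv) and $Y_2=Z$ yields $\varphi(|[A^n,Z]|^2)=0$. Writing $[B^n,Z]=[A^n,Z]+[C,Z]$ and bounding the cross terms in the expansion by the $\varphi$-Cauchy--Schwarz inequality, the theorem reduces to proving $\varphi(|[C,Z]|^2)=0$.

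For this I consider $U_s:=e^{isC}\in\mathfrak{A}$. Since $A$ commutes with $C$, $[A,U_s]=0$; since $\pi_\varphi(C^k)\Omega_\varphi=0$ for $k\geq 1$, the power series for $\pi_\varphi(U_s)\Omega_\varphi$ collapses to $\Omega_\varphi$, so $\varphi_{U_s}=\varphi$. The Def condition then gives $U_s^*\mathfrak{B} U_s=\mathfrak{B}$ for every $s$, and differentiating at $s=0$ shows that $\delta(Z):=-i[C,Z]$ is a bounded $*$-derivation of $\mathfrak{B}$ into itself. Let $\mathcal{J}:=\ker\pi_0$ be the kernel of the GNS representation of $(\mathfrak{B},\varphi|_\mathfrak{B})$. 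Classicality realizes $\pi_0(\mathfrak{B})\cong\mathfrak{B}/\mathcal{J}$ as multiplication operators on $L^2(\mathcal{S}_{DF}(\mathfrak{B}),\mu)$, hence as an abelian $C^*$-algebra; and because $\mathrm{Ad}(U_s^*)$ preserves both $\mathfrak{B}$ and $\varphi$ it preserves $\mathcal{J}$, so $\delta$ descends to a bounded derivation $\bar\delta$ of the abelian quotient $\mathfrak{B}/\mathcal{J}$. By Sakai's theorem --- easily proved in this setting from $\|e^{itY}\|=1$ for self-adjoint $Y$, letting $t\to\infty$ in $\|\bar\delta(e^{itY})\|=|t|\,\|\bar\delta(Y)\|$ --- every bounded derivation of an abelian $C^*$-algebra vanishes, so $\bar\delta=0$. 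Hence $\delta(Z)\in\mathcal{J}$, which gives $\varphi(|[C,Z]|^2)=0$ and thus the theorem.

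The main obstacle I anticipate is this final descent-and-vanish step. A direct expansion of $\varphi(|[C,Z]|^2)$ reduces via the EPR identities $\varphi(YC)=\varphi(CY)=0$ to $\|\pi_\varphi([C,Z])\Omega_\varphi\|^2$, which is tautologically equal to $\varphi(|[C,Z]|^2)$ itself and yields no further progress. To close the loop one must genuinely exploit that the inner derivation $[C,\cdot]$ restricted to $\mathfrak{B}$ lands in the ``commutator part'' of $\mathfrak{B}$ relative to $\varphi$, which is precisely the content of Sakai's theorem on derivations of abelian $C^*$-algebras applied to the induced derivation $\bar\delta$.
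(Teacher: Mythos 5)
Your proof is correct, and its first half is essentially the paper's own argument: you use (Def) with the one-parameter unitary group $U_s=e^{isC}$, $C=B^n-A^n$, note that $\pi_\varphi(C)\Omega_\varphi=0$ forces $\varphi_{U_s}=\varphi$ and $[A,U_s]=0$, and differentiate $U_s^*ZU_s$ at $s=0$ to land $[C,Z]$ in $\mathfrak{B}$ (the paper does the same with $V_t=\exp\{-it(A^n-B^n)\}$, then writes $[B^n,Z]=[A^n,Z]-[A^n-B^n,Z]$ and uses ($A$-Priv) exactly as you do). Where you genuinely diverge is the concluding step. The paper finishes elementarily: having shown $[B^n,Z]\in\mathfrak{B}$, it extends each dispersion-free state $\omega$ of $\mathfrak{B}$ to a state $\tilde\omega$ of $\mathfrak{A}$ by Hahn--Banach; since $\omega$ is definite on $Z$, Cauchy--Schwarz gives $\tilde\omega(B^nZ)=\tilde\omega(ZB^n)=\omega(Z)\tilde\omega(B^n)$, hence $\omega([B^n,Z])=0$ even though $B^n\notin\mathfrak{B}$, whence $\omega(|[B^n,Z]|^2)=|\omega([B^n,Z])|^2=0$ and integration over $\mu$ finishes. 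You instead split $[B^n,Z]=[A^n,Z]+[C,Z]$, kill the first term by multiplicativity of dispersion-free states inside $\mathfrak{B}$, and kill the second by descending the bounded derivation $-i[C,\cdot\,]$ to the abelian GNS quotient $\mathfrak{B}/\ker\pi_0$ and invoking the Singer--Wermer/Sakai vanishing theorem; the two supporting claims you need (invariance of $\ker\pi_0$ under $\mathrm{Ad}(U_s^*)$, abelianness of $\pi_0(\mathfrak{B})$ under classicality) are both correct. However, your closing remark that one \emph{must} exploit Sakai's theorem to close the loop is not right: the Hahn--Banach extension trick above handles $[C,Z]$ --- indeed $[B^n,Z]$ in one stroke --- with no derivation theory, since $\tilde\omega(CZ)=\tilde\omega(ZC)=\omega(Z)\tilde\omega(C)$ for any state extension $\tilde\omega$ of a dispersion-free $\omega$. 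What your route buys is that it never leaves $\mathfrak{B}$ once the commutator is inside it; what it costs is the extra machinery of the quotient construction and the derivation theorem, which the paper's more elementary argument avoids.
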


\begin{proof}
Let $V_{t}=\exp\{-it(A-B)\}$, where $t\in\R$.
Then, we have $[A,V_t]=0$.
Let $(\pi,\cH,\Om)$ be the GNS representation of $\cA$ induced by $\vp$.
Since $\vp$ is an EPR state for $(A,B)$, we have $\pi(A-B)\Om=0$,
and consequently $\pi(V_{t})\Om=\Om$.
Thus, we have $\vp_{V_t}=\vp$.
From (Def), we conclude $V_t^* \cB V_t=\cB$.
Now let $Z\in\cB$.
From ($A$-Priv) we have $[A,Z]\in \cB$.
Note here that $V_t^* Z V_t$ is uniformly continuous in $t$,
since $A-B$ is bounded.
Thus, we also have 
$[A-B,Z]=-i\frac{d}{dt}V_t^* Z V_t|_{t=0}\in\cB$,
so that we obtain $[B,Z]\in\cB$.
Let $\om$ be a dispersion-free state on $\cB$.
By the Hahn-Banach theorem, $\om$ can be extended to a state $\tilde{\om}$ of $\cA$.
Then, we have $\tilde{\om}(ZB)=\tilde{\om}(BZ)=\om(Z)\tilde{\om}(B)$. 
Thus, we have $\om([B,Z])=0$, so that $\om(|[B,Z]|^2)=
|\om([B,Z])|^2=0$.
Since $\vp|_{\cB}$ is classical, there is a probability distribution $\mu$ 
of dispersion-free states satisfying
$$
\vp(|[B,Z]|^2)=\int \om(|[B,Z]|^2)\,d\mu(\om)=0.
$$
This completes the proof for $n=1$.  The proof for $n>1$ is analogously
obtained by replacing $A-B$ by $A^{n}-B^{n}$.
\end{proof}

Let $\rho$ be an EPR state for incommensurable
pairs  $(A_1,A_2)$ and $(B_1,B_2)$.
Suppose that $A_1$ is measured in the state $\rh$.
Then, the measurement result on $A_1$ is considered real.
By Bohr's requirement of a classical description, $B_1$ cannot have the elements
of reality.   Since $A_2$ and $B_2$ are not commuting in $\rho$, they have no
simultaneous reality, so that the problem is on what ground to choose 
either $A_2$ or $B_2$.
According to Howard, Bohr regarded elements predictable with certainty as real 
\cite[p.256]{How79} plausibly, once given such a context.  
Therefore, $A_2$ is an element of reality if $A_1$ is measured in the state $\vp$. 
Since $A_2$ and $B_2$ cannot be ascribed simultaneous reality, 
$B_2$ is not an element of reality in this case. 
This interpretation is consistent with the objectivity requirement,
implicit definability of the beables $\cB$, or invariance under the relevant 
group of symmetries, as shown by the following theorem.

\begin{theorem}\label{th:101107b}\label{theorem3}
Let $\vp$ be an EPR state of a  unital C*-algebra 
$\cA$ for incommensurable
pairs $(A_1,A_2)$ and $(B_1,B_2)$ in $\cA$.
Then, neither $B_1$ nor $B_2$ can be an element 
of any beable subalgebra for $(\vp,A_1)$.
\end{theorem}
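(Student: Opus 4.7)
The plan is to dispose of $B_2$ by invoking Theorem~\ref{th:consis} directly, and to treat $B_1$ separately by a dispersion-free multiplicativity argument inside $\cB$, since Theorem~\ref{th:consis} turns out to be vacuous in the $B_1$ case.

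For $B_2$: apply Theorem~\ref{th:consis} to the commuting pair $(A_1,A_2)$. The hypotheses hold because $\vp$ is EPR for $(A_1,A_2)$ and $\cB$ is beable for $(\vp,A_1)$, so the conclusion is $\vp(|[A_2^n,Z]|^2)=0$ for every $Z\in\cB$ and every $n\geq 1$. If $B_2$ were in $\cB$, then $B_2^{m_2}\in\cB$ as well (since $\cB$ is a $C^*$-subalgebra), and taking $Z=B_2^{m_2}$ and $n=n_2$ would force $\vp(|[A_2^{n_2},B_2^{m_2}]|^2)=0$, directly contradicting the second clause of the incommensurability hypothesis.

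For $B_1$: Theorem~\ref{th:consis} supplies no information, because $A_2\in\cA_2$ and $B_1\in\cA_1$ commute outright, so the resulting commutator identity is trivial. Instead I would argue directly. Suppose $B_1\in\cB$. Since $A_1\in\cB$ by ($A$-Priv) and $\cB$ is a $C^*$-subalgebra, both $A_1^{n_1}$ and $B_1^{m_1}$ lie in $\cB$, and hence so does the commutator $[A_1^{n_1},B_1^{m_1}]$. The key general fact to invoke is that any dispersion-free state $\om$ on a unital $C^*$-algebra is multiplicative: from $\om(Z^*Z)=|\om(Z)|^2$ the Cauchy--Schwarz inequality yields $\om(YZ)=\om(Y)\om(Z)$ for $Y,Z\in\cB$. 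Therefore $\om([A_1^{n_1},B_1^{m_1}])=\om(A_1^{n_1})\om(B_1^{m_1})-\om(B_1^{m_1})\om(A_1^{n_1})=0$, and dispersion-freeness upgrades this to $\om(|[A_1^{n_1},B_1^{m_1}]|^2)=|\om([A_1^{n_1},B_1^{m_1}])|^2=0$. Integrating against the classicality measure $\mu$ on $\cS_{DF}(\cB)$ then gives $\vp(|[A_1^{n_1},B_1^{m_1}]|^2)=0$, contradicting the first clause of incommensurability. The only care needed is in citing the multiplicativity of dispersion-free states; the proof is otherwise mechanical. A notable feature, rather than a real obstacle, is that the $B_1$ half makes no use of the EPR relation for $(B_1,B_2)$: what truly forbids $B_1\in\cB$ is simply the state-dependent non-commutativity of $A_1$ and $B_1$ together with classicality of $\cB$, while the EPR condition on $(B_1,B_2)$ serves only to single out $B_1$ as the natural candidate for an element of reality.
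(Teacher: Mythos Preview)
Your proof is correct and follows essentially the same approach as the paper. For $B_2$ you invoke Theorem~\ref{th:consis} exactly as the paper does; for $B_1$ the paper simply cites \cite[Proposition~2.2]{HC99} to obtain $\vp(|[A_1^n,B_1^m]|^2)=0$, and your dispersion-free multiplicativity argument is precisely an inline proof of that proposition (the same computation already appears inside the proof of Theorem~\ref{th:consis}), so the two proofs differ only in that yours is self-contained where the paper defers to the reference.
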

\begin{proof}
Let $\cB$ be a beable subalgebra for $(\vp,A_1)$.
Suppose $B_1\in\cB$. 
By \cite[Proposition 2.2]{HC99}, $\vp(|[A_1^n,B_1^m]|^2)=0$
for all $n,m$.
This contradicts the incommensurability of 
$(A_1,A_2)$ and $(B_1,B_2)$, so that $B_1\not\in\cB$.
Suppose $B_2\in\cB$. 
It follows from Theorem \ref{th:101017}
that $\vp(|[A_2^n,Z]|^2)=0$ with $Z=B_2^m\in\cB$ for all $n,m$.
This contradicts the incommensurability of 
$(A_1,A_2)$ and $(B_1,B_2)$, so that $B_2\not\in\cB$. 
\end{proof}

\section{Unbounded observables}
In the previous section, we have extended the Halvorson-Clifton 
consistency theorem to EPR states in algebraic quantum theory.  
However,  this formulation is limited to the case where
observables to be correlated by EPR states are bounded.
Since the original EPR argument involves unbounded observables
such as position and momentum, a further extension of the formulation
is necessary to include any EPR states for unbounded observables. 
This section is devoted to such an extension of the preceding
result.  In the next section, we shall show that the result obtained
in this section reconstructs Bohr's reply to the original EPR argument.

In algebraic quantum theory, any quantum system $\bS$ is represented
by a C*-algebra $\cA$.  Obviously, $\cA$ does not contain any unbounded observables.  Thus, $\cA$ ``describes'' an unbounded observable $A$ of $\bS$ 
through a class of bounded functions $f(A)$ of $A$.
Here, we assume that the C*-algebra $\cA$ describes an unbounded observable $A$ of $\bS$ if and only if $\cA$ contains the exponential
function $e^{itA}$ of $A$ for any real number $t$, or equivalently 
if and only if $\cA$ 
contains every almost periodic functions $f(A)$ of $A$.
This assumption is relevant to CCR algebras or Weyl algebras as discussed
in the next section.

Denote by ${\rm CB}(\R)$ the space of complex-valued bounded continuous functions on the real line $\R$.
We say that a function $f\in\R$ is {\em almost periodic} 
($f\in \AP(\R)$)  if its orbit under translations
$O(f)=\{f_{t}|t\in \R\}$ is relatively compact in ${\rm CB}(\R)$,
where $f_{t}(x)=f(x-t)$ for all $x\in\R$ \cite{Gre69}.
Denote by $\ch_t$ for a real $t\in\R$ the character on $\R$ 
defined by $\ch_{t}(x)=e^{itx}$ for all $x\in\R$.
The space $\AP(\R)$ is characterized as the C*-algebra generated by
the set of all characters $\ch_{t}$ with $t\in\R$,
and is *-isomorphic to the C*-algebra $C(b\R)$ of continuous
functions on the Bohr compactification $b\R$ of $\R$ \cite{Loo53}.

Let $\cA$ be a unital C*-algebra and let $\vp$ be a state of $\cA$.
An  {\em observable}  of $\cA$ is defined to be a *-homomorphism 
from the abelian C*-algebra $\AP(\R)$ of almost periodic functions on $\R$ into $\cA$.  
Denote by $\OO(\cA)$ the space of observables of $\cA$.
The range $A(\AP(\R))$ of an observable $A$ is denoted by  
$C^*(A)$, which is a unital C*-subalgebra of $\cA$.
For any observable $A\in\OO(\cA)$, we define $U^A_t\in\cA$ by
$U^A_t=A(\ch_{t})$ for $t\in\R$ and $f(A)\in\cA$ by $f(A)=A(f)$
for $f\in\AP(\R) $. 
Then, the mapping $t\mapsto U^A_t$ is a (not necessarily continuous) group
homomorphism from $\R$ to $\cU(\cA)$,
where $\cU(\cA)$ is the group of unitary elements of $\cA$.
Conversely,  for every group homomorphism  $U_t$  from $\R$ to $\cU(\cA)$,
there is a *-homomorphism $\pi$ from $\AP(\R)$ into $C^{*}(A)$
such that $\pi(f)=\sum_{k=1}^{n} \al_k U^A_{t_k}$  if 
 $f(x)=\sum_{k=1}^{n}\al_{k}e^{it_k x}$.
An observable $A$ is said to be {\em bounded},
if $t\mapsto U^A_t$ is norm continuous, i.e., 
$\lim_{t\to 0} \|U^A_t -1\|=0$.
For every bounded observable $A$, there is some $\hat{A}\in\cA_{sa}$
uniquely such that $U^A_t=e^{it\hat{A}}$.
Conversely, every $\hA\in\cA_{sa}$ has the corresponding bounded
observable $A$ so that bounded observables and self-adjoint elements
of $\cA$ are in one-to-one correspondence.
Observables
$A,B\in\OO(\cA)$ are said to be {\em commuting} if
$\CC(A)$ and $\CC(B)$ are commuting subalgebras.
Let $\vp$ be a state of $\cA$.
For any commuting $A, B\in \OO(\cA)$, we say that $\vp$ is an {\em EPR state}
for $(A,B)$ if  $\vp(U^{A}_t U^{B}_{-t})=1$ for any $t\in\R$.
This relation is equivalent to the relation
$\vp((\hA-\hB)^2)=0$, if $A,B$ are bounded,
and is equivalent to the ``perfect correlation'' between $A$ and $B$,
if  $A$ and $B$ are represented by unbounded self-adjoint operators 
and $\rh$ is represented by a density operator 
on a Hilbert space on which $\cA$ acts  \cite[Theorem 5.5]{06QPC}.

Now, we extend the notion of a measurement context to any pair $(\vp,A)$
of a state $\vp$ of $\cA$ and an unbounded observable $A$ of $\cA$.
A C*-subalgebra $\cB$ of $\cA$ is said to be {\em beable} for a measurement
context $(\vp,A)$ if
it satisfies the following conditions.
\begin{enumerate}
\item[(i)]  (Beable) $\vp$ is classical on $\cB$.

\item[(ii)] ($A$-Prev) $\CC(A)\subseteq\cB$.

\item[(iii)] (Def)  For any unitary $U\in\cA$ leaving $(\vp,A)$ invariant,
the subalgebra $\cB$ is globally invariant under $U$; namely,
 $U^{*}\cB U =\cB$, if $[U,U^A_t]=0$ for every $t\in \R$ 
and $\vp_U=\vp$.
\end{enumerate}

\begin{proposition}\label{prop:110602}
Let $A,B$ be commuting observables of $\cA$.
A state $\vp$ of $\cA$ is an EPR state for $(f(A),f(B))$ for every $f\in \AP(\R)$
if it is an EPR state for $(A,B)$.
\end{proposition}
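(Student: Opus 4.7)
The plan is to work inside the GNS representation $(\pi,\cH,\Om)$ of $\cA$ induced by $\vp$, translate the EPR hypothesis into the single vector identity $\pi(U^A_t)\Om=\pi(U^B_t)\Om$, and then propagate this identity from the generating characters $\ch_t$ to every $h\in\AP(\R)$ using uniform density of trigonometric polynomials in $\AP(\R)\cong C(b\R)$.

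First I would derive $\pi(U^A_t)\Om=\pi(U^B_t)\Om$ for every $t\in\R$ from the EPR hypothesis $\vp(U^A_t U^B_{-t})=1$. Since $U^A_t U^B_{-t}$ is unitary and has expectation $1$ in $\vp$, the Cauchy--Schwarz equality condition forces $\pi(U^A_t U^B_{-t})\Om=\Om$; multiplying by $\pi(U^A_{-t})=\pi(U^A_t)^*$ gives the claimed vector identity. Linearity then yields $\pi(A(p))\Om=\pi(B(p))\Om$ for every trigonometric polynomial $p=\sum_k c_k\ch_{t_k}$, and since such polynomials are uniformly dense in $\AP(\R)$ while $\pi\circ A$ and $\pi\circ B$ are contractive *-homomorphisms, the identity extends by continuity to
\[
\pi(A(h))\Om=\pi(B(h))\Om\quad\text{for every }h\in\AP(\R).
\]

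The second step is, for fixed real-valued $f\in\AP(\R)$ (so that $f(A)=A(f)$ is self-adjoint and induces a bounded observable) and fixed $t\in\R$, to apply the previous identity with $h=\ch_t\circ f$. By the *-homomorphism property, $A(\ch_t\circ f)=\ch_t(A(f))=U^{f(A)}_t$, and similarly $B(\ch_t\circ f)=U^{f(B)}_t$, so $\pi(U^{f(A)}_t)\Om=\pi(U^{f(B)}_t)\Om$. From this and the unitarity of $U^{f(B)}_t$, a one-line rewriting of $\vp(U^{f(A)}_t U^{f(B)}_{-t})$ as the inner product $\braket{\pi(U^{f(A)}_{-t})\Om,\pi(U^{f(B)}_{-t})\Om}$ collapses to $\|\pi(U^{f(B)}_{-t})\Om\|^2=1$, which is the EPR condition for $(f(A),f(B))$.

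The only step with genuine content beyond bookkeeping is verifying that $\ch_t\circ f\in\AP(\R)$ whenever $f\in\AP(\R)$, and this is where I expect to be most careful. Under the identification $\AP(\R)\cong C(b\R)$, the function $f$ extends continuously to the Bohr compactification $b\R$, and composing with the continuous function $\ch_t$ on the (bounded) range of $f$ again yields a continuous function on $b\R$, hence an almost periodic function on $\R$. Granting this closure property of $\AP(\R)$ under continuous composition, the remainder of the argument is routine GNS manipulation, density of trigonometric polynomials in $\AP(\R)$, and the functional-calculus identity $A(e^{itf})=e^{itA(f)}$ inside a *-homomorphism.
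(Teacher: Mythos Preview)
Your proof is correct and essentially identical to the paper's: both pass to the GNS representation, derive $\pi(U^A_t)\Om=\pi(U^B_t)\Om$ from the EPR hypothesis, and extend to $\pi(h(A))\Om=\pi(h(B))\Om$ for all $h\in\AP(\R)$ by linearity and uniform density of trigonometric polynomials. The paper stops there, reading off $\vp((f(A)-f(B))^2)=0$ directly from $\pi(f(A))\Om=\pi(f(B))\Om$; your additional pass through $h=\ch_t\circ f$ (and the accompanying verification that this composite lies in $\AP(\R)$) is correct but unnecessary, since the squared-difference form is already the EPR condition for the bounded self-adjoint elements $f(A),f(B)$.
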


\begin{proof}
Suppose that $\vp$ is an EPR state for  $(A,B)$.
Let $(\pi,\cH,\Om)$ be a GNS representation of $\cA$ induced by $\vp$.
By assumption,
$$
\|\pi(U^A_t)\Om-\pi(U^B_t)\Om\|^2=
2-2\Re\bracket{\pi(U^A_t)\Om,\pi(U^B_t)\Om}
=
2-2\vp(U^A_{-t}U^B_t)
=0.
$$
Thus,
$\pi(U^A_t)\Om=\pi(U^B_t)\Om$.
It follows from linearity that if 
$$
f(x)=\sum_{k=1}^{n}\al_ke^{it_k x}
$$ 
then
$\pi(f(A))\Om=\pi(f(B))\Om$, and this holds for every $f\in\AP(\R)$ by continuity.
Therefore,  we have  $\vp((f(A)-f(B))^2)=0$ for all $f\in \AP(\R)$.
\end{proof}

The following theorem extends  Theorem   \ref{th:consis}
to unbounded observables.

\begin{theorem}\label{th:101029a}\label{theorem4}
Let $\cA$ be a unital  C*-algebra and let $\vp$ be a state of $\cA$.
Let $A, B\in \OO(\cA)$.
If $\vp$ is an EPR state for  $(A,B)$ and $\cB$ is beable for $(\vp,A)$,
then we have
$$
\vp(|[f(B),Z]|^2)=0
$$
for all $Z\in\cB$ and $f\in\AP(\R)$.
\end{theorem}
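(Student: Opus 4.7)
The plan is to mimic the proof of Theorem~\ref{th:101017} one ``frequency'' at a time, using the bounded element $g(A)-g(B)$ in place of $A-B$ for each real-valued $g\in\AP(\R)$, and then assemble the pieces for complex $f$. Fix $Z\in\cB$ and a real $g\in\AP(\R)$. Proposition~\ref{prop:110602} yields $\vp((g(A)-g(B))^2)=0$, and self-adjointness of $g(A)-g(B)$ upgrades this to $\pi(g(A)-g(B))\Om=0$ in the GNS triple $(\pi,\cH,\Om)$ of $\vp$. Consequently the bounded unitary $V^g_s:=\exp\{-is(g(A)-g(B))\}\in\cA$ satisfies $\pi(V^g_s)\Om=\Om$, so $\vp_{V^g_s}=\vp$; and it commutes with \emph{every} $U^A_t$ since $g(A)\in C^*(A)$ lies in the abelian subalgebra generated by $\{U^A_t\}_{t\in\R}$, while $g(B)\in C^*(B)$ commutes with all of $C^*(A)$ by the commuting-observables hypothesis on $A$ and $B$.

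These two properties trigger the (Def) clause of the measurement context $(\vp,A)$, giving $(V^g_s)^*\cB V^g_s=\cB$ for every $s\in\R$. Since $g(A)-g(B)$ is bounded, the curve $s\mapsto(V^g_s)^*ZV^g_s\in\cB$ is norm-differentiable at $s=0$ with derivative $i[g(A)-g(B),Z]$, which the norm-closedness of $\cB$ places in $\cB$. Combined with $[g(A),Z]\in\cB$ (valid since $g(A)\in C^*(A)\subseteq\cB$ by ($A$-Prev)), subtraction yields $[g(B),Z]\in\cB$.

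From here the closing argument of Theorem~\ref{th:101017} transplants verbatim. Any dispersion-free state $\om$ on $\cB$ extends by Hahn--Banach to a state $\tilde\om$ of $\cA$; dispersion-freeness at $Z$ gives $\tilde\om(g(B)Z)=\tilde\om(Zg(B))=\om(Z)\tilde\om(g(B))$, so $\om([g(B),Z])=0$, and dispersion-freeness on $[g(B),Z]\in\cB$ delivers $\om(|[g(B),Z]|^2)=|\om([g(B),Z])|^2=0$; integrating against the measure $\mu$ from (Beable) gives $\vp(|[g(B),Z]|^2)=0$. For general $f\in\AP(\R)$, writing $f=g+ih$ with $g,h$ real and applying the triangle inequality for the GNS seminorm $X\mapsto\vp(X^*X)^{1/2}$ to $[f(B),Z]=[g(B),Z]+i[h(B),Z]$ completes the argument.

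The main obstacle is certifying that $V^g_s$ lies in the symmetry class to which (Def) applies: unlike Theorem~\ref{th:101017}, the unbounded (Def) clause demands commutation with the \emph{entire} one-parameter group $\{U^A_t\}_{t\in\R}$, and this is where the hypothesis that $A$ and $B$ are commuting observables (i.e., that the whole subalgebras $C^*(A)$ and $C^*(B)$ commute) is essential, enabling $g(B)$ to enter the exponent of $V^g_s$ without spoiling commutation with every $U^A_t$.
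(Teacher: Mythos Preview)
Your proof is correct and follows essentially the same route as the paper's: replace $A-B$ in the proof of Theorem~\ref{th:101017} by $g(A)-g(B)$ for real $g\in\AP(\R)$, invoke Proposition~\ref{prop:110602} to obtain $\vp_{V^g_s}=\vp$, and then run the dispersion-free argument. The paper's proof is a two-line sketch; you have supplied precisely the details it suppresses, in particular the verification that $V^g_s$ commutes with the full group $\{U^A_t\}_{t\in\R}$ (needed for the unbounded (Def) clause) and the reduction from complex $f$ to real parts via the GNS seminorm.
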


\begin{proof}
The proof is obtained from the proof of Theorem \ref{theorem2}
by replacing $A-B$ by $f(A)-f(B)$ using Proposition \ref{prop:110602}
to show $\vp_{V_t}=\vp$ if $f$ is real-valued.  Then, the assertion 
follows easily. 
\end{proof}

Let $\vp$ be a state of $\cA$.
Observables $A$ and $B$ of $\cA$ are said to {\em commute in $\vp$} if
$\vp(|[U^{A}_t,U^{B}_s]|^2)=0$ for all $t,s\in\R$.
This definition is equivalent to the previous one for bounded observables
$A$ and $B$. 
Let $\cA_1,\cA_2$ be commuting subalgebras of $\cA$.
Then, pairs $(A_1,A_2)$ and $(B_1,B_2)$ with $A_j,B_j\in\cA_j$ 
for $j=1,2$ are called {\em incommensurable pairs} from $\cA_1,\cA_2$
for $\vp$ if $A_j,B_j\in\cO(\cA_j)$ for $j=1,2$ and 
$\vp(|[U^{A_j}_t,U^{B_j}_s]|^2)\not=0$ for some $t,s\in\R$.

The following theorem extends  Theorem   \ref{th:101107b}
to unbounded observables.

\begin{theorem}\label{theorem5}
Let $\cA_1,\cA_2$ be commuting subalgebras of  a unital C*-algebra $\cA$.
Let $\vp$ be an EPR state of $\cA$ 
for incommensurable pairs $(A_1,A_2)$ and $(B_1,B_2)$ from $(\cA_1,\cA_2)$. 
Then, neither $B_1$ nor $B_2$ can be an observable
of any beable subalgebra for $(\vp,A_1)$.
\end{theorem}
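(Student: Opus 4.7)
My plan is to reduce the theorem to the bounded setting by working with the generating unitaries $U^{A_j}_t$ and $U^{B_j}_s$ in place of the polynomials in $A_j,B_j$ that appeared in the proof of Theorem~\ref{theorem3}, and to use Theorem~\ref{theorem4} in place of Theorem~\ref{theorem2}. As in the bounded case I would split into two cases according to whether the hypothetical membership is $B_1\in\OO(\cB)$ or $B_2\in\OO(\cB)$, where $\cB$ is a beable subalgebra for $(\vp,A_1)$; in each case the goal is to derive $\vp(|[U^{A_j}_t,U^{B_j}_s]|^2)=0$ for all $t,s\in\R$ and so contradict the incommensurability hypothesis in the $j$-th slot.

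The case $B_2\in\OO(\cB)$ is the easier one: since $\vp$ is an EPR state for $(A_1,A_2)$ and $\cB$ is beable for $(\vp,A_1)$, Theorem~\ref{theorem4} with $A=A_1$, $B=A_2$ gives $\vp(|[f(A_2),Z]|^2)=0$ for every $f\in\AP(\R)$ and every $Z\in\cB$. Taking $f=\ch_t$ and $Z=U^{B_2}_s\in \CC(B_2)\subseteq\cB$ yields $\vp(|[U^{A_2}_t,U^{B_2}_s]|^2)=0$ for all $t,s\in\R$, in contradiction with incommensurability.

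For the case $B_1\in\OO(\cB)$ Theorem~\ref{theorem4} is not directly available, because $\vp$ is not assumed to be an EPR state for $(A_1,B_1)$. Instead I would argue directly from beability, mirroring the opening step of the proof of Theorem~\ref{theorem3} that appealed to HC99, Prop.~2.2. Here $\CC(A_1)\subseteq\cB$ by ($A$-Prev) and $\CC(B_1)\subseteq\cB$ by hypothesis, so for each $t,s\in\R$ both $U^{A_1}_t$, $U^{B_1}_s$, and their commutator lie in $\cB$. For any dispersion-free state $\om$ on $\cB$ I would extend $\om$ to $\cA$ by Hahn-Banach and use dispersion-freeness to deduce the multiplicativity relation $\om(XY)=\om(X)\om(Y)$ for $X,Y\in\cB$ (exactly as in the Theorem~\ref{theorem2} proof), whence $\om$ annihilates $[U^{A_1}_t,U^{B_1}_s]$. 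Applying the dispersion-free identity once more to this commutator gives $\om(|[U^{A_1}_t,U^{B_1}_s]|^2)=|\om([U^{A_1}_t,U^{B_1}_s])|^2=0$, and integrating against the classical measure on $\cS_{DF}(\cB)$ yields $\vp(|[U^{A_1}_t,U^{B_1}_s]|^2)=0$ for all $t,s$, contradicting incommensurability in the $j=1$ slot.

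The main obstacle I anticipate is not conceptual but bookkeeping: verifying that the assumption ``$B_j$ is an observable of $\cB$'' really means $\CC(B_j)\subseteq\cB$, so that all of the unitaries $U^{B_j}_s$ together with their algebraic combinations with the $U^{A_1}_t$ lie inside $\cB$ and are therefore available to both the dispersion-free identity and the classical integral representation of $\vp$. Once this is noted, the rest is a transcription of the bounded argument under the dictionary ``polynomials $\leftrightarrow$ almost periodic functions,'' with exponentials $\ch_t$ playing the role that $x^n$ played in Theorem~\ref{theorem3}.
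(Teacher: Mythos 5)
Your proposal is correct and follows essentially the same route as the paper: the $B_2$ case is handled exactly as in the paper by applying Theorem~\ref{theorem4} with $Z=U^{B_2}_s$ and $f=\ch_t$, and the $B_1$ case reaches the same conclusion $\vp(|[U^{A_1}_t,U^{B_1}_s]|^2)=0$. The only difference is cosmetic: where the paper cites \cite[Proposition 2.2]{HC99} for the $B_1$ case, you re-derive that fact inline from the multiplicativity of dispersion-free states on $\cB$ (and since here both $U^{A_1}_t$ and $U^{B_1}_s$ already lie in $\cB$, the Hahn--Banach extension you mention is not even needed).
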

\begin{proof}
Let $\cB$ be a beable subalgebra for $(\vp,A_1)$.
Suppose $B_1\in\cO(\cB)$. 
By \cite[Proposition 2.2]{HC99}, 
$\vp(|[U^{A_1}_t,U^{B_1}_s]|^2)=0$ for any $t,s\in\R$.
This contradicts the incommensurability of 
$(A_1,A_2)$ and $(B_1,B_2)$, so that $B_1\not\in\cB$.
Suppose $B_2\in\cB$. 
It follows from Theorem \ref{th:101029a}
that $\vp(|[U^{A_2}_t,Z]|^2)=0$ with $Z=U^{B_2}_s\in\cB$.
This contradicts the incommensurability of 
$(A_1,A_2)$ and $(B_1,B_2)$, so that $B_2\not\in\cB$. 
\end{proof}

\section{EPR argument}
 
In this section, we formulate the original EPR argument 
following Halvorson and Clifton \cite{HC02},
and explicitly show that the general theorem obtained 
in the preceding section
can reconstruct Bohr's reply to the original EPR argument.
 
 \sloppy
Let  $\hQ, \hP$ be self-adjoint operators corresponding to 
position and momentum, respectively, in the Schr\"{o}dinger representation 
on $L^2(\R)$ of one-dimensional motion.
Let  $\cA[\R^2]$ be the Weyl algebra, namely, $\cA[\R^2]$ be the
C*-algebra generated by $f(\hQ), f(\hP)$ for all $f\in\AP(\R)$. 
We define observables $Q,P$ of $\cA[\R^2]$ by $Q(f)=f(\hQ)$ and
$P(f)=f(\hP)$ for all $f\in\AP(\R)$.
Then,  $\CC(\hQ)$ and $\CC(\hP)$ are *-isomorphic to $\AP(\R)$.
For any observable $X$ of $\cA[\R^2]$, we denote 
the corresponding observables in  $\cA[\R^2]\otimes \C I$ 
and  $\C I\otimes \cA[\R^2]$ 
by $X_1=X\otimes I$ and $X_2=I\otimes X$, respectively.

We define observables $Q_1-Q_2$ and $P_1+P_2$ of  
 $\cA[\R^2]\otimes \cA[\R^2]$ by 
$Q_1-Q_2(f)=f(\hQ_1-\hQ_2)$, $P_1+P_2(f)=f(\hP_1+\hP_2)$ 
for all $f\in \AP(\R)$. 
Denote by $C^*(Q_1-Q_2,P_1+P_2)$ the C*-algebra
generated by $C^*(Q_1-Q_2)$ and $C^*(P_1+P_2)$.
In what follows,  let
 $\cA=\cA[\R^2]\otimes \cA[\R^2]$.

\begin{proposition}
 $C^{*}(Q_1-Q_2,P_1 +P_2)$ is *-isomorphic to $\AP(\R^2)$.
\end{proposition}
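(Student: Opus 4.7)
The plan is to construct the desired *-isomorphism by exhibiting $C^{*}(Q_1-Q_2, P_1+P_2)$ as the image of a canonical *-representation of $\AP(\R^2)$ and then showing this representation is isometric.

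First I would verify commutativity. The canonical commutation relations give $[\hat Q_j,\hat P_k] = iI\,\delta_{jk}$, hence $[\hat Q_1-\hat Q_2,\,\hat P_1+\hat P_2] = iI-iI = 0$, so the unitaries $U^{Q_1-Q_2}_t = e^{it\hat Q}\otimes e^{-it\hat Q}$ and $U^{P_1+P_2}_s = e^{is\hat P}\otimes e^{is\hat P}$ in $\cA$ commute for every $s,t\in\R$, and $W(t,s) = U^{Q_1-Q_2}_t\,U^{P_1+P_2}_s$ is a group homomorphism from $\R^{2}$ to $\cU(\cA)$. Because $\AP(\R^2)$ is generated as a C*-algebra by the characters $\chi_{(t,s)}(x,y)=e^{i(tx+sy)}$, the same extension argument used for $\AP(\R)$ in Section 4 (evaluate the group homomorphism on characters, extend linearly to trigonometric polynomials, then by continuity to all of $\AP$) produces a unique *-homomorphism $\Phi\colon\AP(\R^2)\to\cA$ with $\Phi(\chi_{(t,s)}) = W(t,s)$, whose image is by construction the C*-algebra generated by all $U^{Q_1-Q_2}_t$ and all $U^{P_1+P_2}_s$, i.e., $C^{*}(Q_1-Q_2,P_1+P_2)$.

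The core step is injectivity of $\Phi$. I would realize $\Phi$ inside $\bB(L^2(\R^2))$ via the Schr\"{o}dinger representation and perform the unitary change of variables $u=(x_1+x_2)/\sqrt{2}$, $v=(x_1-x_2)/\sqrt{2}$. Under this change, $\hat Q_1-\hat Q_2$ becomes multiplication by $\sqrt{2}\,v$ while $\hat P_1+\hat P_2$ becomes $-i\sqrt{2}\,\partial_u$. A partial Fourier transform in the $u$ variable, with dual variable $\xi$, then turns both into multiplication operators on $L^2(\R^2_{v,\xi})$: multiplication by $\sqrt{2}\,v$ and $\sqrt{2}\,\xi$ respectively, i.e., multiplication by two independent Euclidean coordinates. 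Consequently, on any trigonometric polynomial $f(x,y)=\sum_k\alpha_k e^{i(t_k x+s_k y)}$ the element $\Phi(f)$ acts as multiplication by $f(\sqrt{2}\,v,\sqrt{2}\,\xi)$, and by continuity this identification extends to every $f\in\AP(\R^2)$. Therefore $\|\Phi(f)\| = \sup_{(v,\xi)\in\R^2}|f(\sqrt{2}\,v,\sqrt{2}\,\xi)| = \|f\|_{\infty}$, the $\AP(\R^2)$ norm, so $\Phi$ is isometric and in particular injective.

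The main obstacle is precisely this injectivity step. Abstract algebraic considerations are insufficient, because $\AP(\R^2)\cong C(b\R^2)$ is a very large commutative C*-algebra whose Gelfand spectrum is the Bohr compactification $b\R^2$ rather than $\R^2$ itself. What rescues the argument is the fact that the commuting pair $(\hat Q_1-\hat Q_2,\,\hat P_1+\hat P_2)$ is, after the canonical linear change of coordinates and partial Fourier transform, unitarily equivalent to multiplication by two independent real coordinates on $L^2(\R^2)$; its joint spectrum therefore fills $\R^2$, and one reads off $\|\Phi(f)\|=\|f\|_{\infty}$ from the Gelfand picture of multiplication operators.
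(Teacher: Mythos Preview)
Your argument is correct and rests on the same idea as the paper's: exhibit a unitary equivalence carrying $(\hQ_1-\hQ_2,\hP_1+\hP_2)$ to a pair of independent multiplication operators with joint spectrum all of $\R^2$, so that the generated C*-algebra is just $\AP(\R^2)$ acting by multiplication. The difference lies only in the choice of intertwiner and in the packaging. The paper composes a partial Fourier transform on the second factor, $\al_1=\mathrm{Ad}(1\otimes\cF)$, which sends $\hQ_2\mapsto\hP_2$, with the shear $\al_2=\mathrm{Ad}(e^{-i\hP_1\hQ_2/\hbar})$, which sends $(\hQ_1,\hP_2)\mapsto(\hQ_1-\hQ_2,\hP_1+\hP_2)$; this lands directly on $C^*(Q_1,Q_2)$, for which the identification with $\AP(\R^2)$ via $f\mapsto f(\hQ_1,\hQ_2)$ is taken as evident. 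You instead use the orthogonal rotation $(x_1,x_2)\mapsto(u,v)$ followed by a partial Fourier transform in $u$, arriving at multiplication by $(\sqrt{2}\,v,\sqrt{2}\,\xi)$, and you spell out explicitly the isometry $\|\Phi(f)\|=\|f\|_\infty$ that guarantees injectivity. Your version is a bit more self-contained on that last point, while the paper's is more concise; mathematically the two are equivalent realizations of the same reduction.
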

\begin{proof}
Let  $\cF$ be the Fourier transform on $L^2(\R)$.
Let $\al_1(X)=(1\otimes\cF^{*})X(1\otimes\cF)$.
For every $f\in \AP(\R)$, we have
 $\al_1(f(\hQ_1))=f(\hQ_1)$ and $\al_1(f(\hQ_2))=f(\hP_2)$.
 Let  $U=e^{-i\hP_1\hQ_2/\hbar}$ and $\al_2(X)=U^{*}XU$.
Then, $\al_2$ is a *-automorphism of $\cA$.
We have $\al_2(f(\hQ_1))=f(\hQ_1-\hQ_2)$ and 
$\al_2(f(\hP_2))=f(\hP_1+\hP_2)$ for every $f\in \AP(\R)$.
Thus, we have  $\al_2\circ\al_1(f(\hQ_1,\hQ_2))=f(\hQ_1-\hQ_2,\hP_1+\hP_2)$
for all $f\in \AP(\R^2)$.
The correspondence $\iota:f(x,y)\mapsto f(\hQ_1,\hQ_2)$ is 
*-isomorphism from $\AP(\R^2)$ to $C^{*}(Q_1,Q_2)$,
and hence $C^{*}(Q_1 -Q_2,P_1 +P_2)$ is *-isomorphic to
$\AP(\R^2)$.
\end{proof}

For every $(u,v)\in\R^2$,  a {\em canonical $(u,v)$-EPR state}
is a state  $\vp$ of $\cA$ satisfying
\beql{CEPR}
 \vp(f(\hQ_1-\hQ_2,\hP_1+\hP_2))=f(u,v)
 \eeq
 for all $f\in\AP(\R)$.
 
Canonical EPR states can be constructed by the following standard argument.
By the *-isomorphism between $C^{*}(Q_1-Q_2,P_1 +P_2)$
and $\AP(\R^2)$, every point  $(u,v)\in\R^2$
defines a state  $\vp$ of $C^{*}(Q_1 -Q_2,P_1 +P_2)$ 
satisfying \Eq{CEPR}.  By the Hahn-Banach theorem, this state extends to a  
canonical  $(u,v)$-EPR state of $\cA$.
Halvorson \cite{Hal00} proved the uniqueness of the canonical $(u,v)$-EPR state
for every $u,v\in\R$.

\begin{proposition}
The canonical $(u,v)$-EPR state is an EPR state of $\cA$ for $(Q_1,u1+Q_2)$ and $(P_1,v1-P_2)$.
\end{proposition}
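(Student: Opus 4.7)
The plan is to unpack the definition of an EPR state for unbounded observables: $\vp$ is an EPR state for a commuting pair $(A,B)$ of observables of $\cA$ precisely when $\vp(U^{A}_{t}U^{B}_{-t})=1$ for every $t\in\R$. So after verifying that the two pairs are commuting (trivial here, because the operators on the left of each pair live in $\cA[\R^{2}]\otimes\C I$ while those on the right live in $\C I\otimes\cA[\R^{2}]$), all that needs to be checked is the identity on characters, and this will reduce directly to the defining property \eq{CEPR} of a canonical $(u,v)$-EPR state.

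For the position pair $(Q_{1},u\mathbf{1}+Q_{2})$, I take $U^{Q_{1}}_{t}=e^{it\hQ_{1}}\otimes I$ and interpret $u\mathbf{1}+Q_{2}$ as the observable with Weyl unitaries $U^{u\mathbf{1}+Q_{2}}_{t}=e^{itu}(I\otimes e^{it\hQ_{2}})$. Then
\[
U^{Q_{1}}_{t}U^{u\mathbf{1}+Q_{2}}_{-t}=e^{-itu}\bigl(e^{it\hQ_{1}}\otimes e^{-it\hQ_{2}}\bigr)=e^{-itu}\,e^{it(\hQ_{1}-\hQ_{2})}.
\]
Applying \eq{CEPR} to the function $f(x,y)=e^{itx}\in\AP(\R^{2})$ gives $\vp(e^{it(\hQ_{1}-\hQ_{2})})=e^{itu}$, so $\vp(U^{Q_{1}}_{t}U^{u\mathbf{1}+Q_{2}}_{-t})=1$ for every $t\in\R$.

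For the momentum pair $(P_{1},v\mathbf{1}-P_{2})$, the same bookkeeping yields $U^{P_{1}}_{t}U^{v\mathbf{1}-P_{2}}_{-t}=e^{-itv}\,e^{it(\hP_{1}+\hP_{2})}$, and \eq{CEPR} applied to $f(x,y)=e^{ity}$ gives $\vp(e^{it(\hP_{1}+\hP_{2})})=e^{itv}$, so again $\vp(U^{P_{1}}_{t}U^{v\mathbf{1}-P_{2}}_{-t})=1$ for every $t\in\R$.

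There is essentially no hard step here: the statement is a direct translation of the definition of a canonical $(u,v)$-EPR state into the Werner-type EPR condition on exponentials, and the only thing one has to be careful about is the sign convention in $U^B_{-t}$ and the scalar phases arising from the additive constants $u\mathbf{1}$ and $v\mathbf{1}$, which are precisely what cancels the character values $e^{itu}$ and $e^{itv}$ coming out of \eq{CEPR}.
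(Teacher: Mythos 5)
Your proof is correct and follows essentially the same route as the paper's: both reduce the Werner-type condition $\vp(U^{A}_{t}U^{B}_{-t})=1$ to the defining property \eq{CEPR} evaluated on the characters $e^{itx}$ and $e^{ity}$, with the scalar phases from $u\mathbf{1}$ and $v\mathbf{1}$ cancelling the resulting factors $e^{itu}$ and $e^{itv}$. The only difference is that you spell out the commutativity check and the intermediate identity $\vp(e^{it(\hQ_1-\hQ_2)})=e^{itu}$ slightly more explicitly than the paper does.
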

\begin{proof}
Let $\vp$ be the canonical $(u,v)$-EPR state.
We have
$$
\vp(U^{Q_1}_{t}U^{u1+Q_2}_{-t})=
\vp(e^{it\hQ_1}e^{-it(u1+\hQ_2)})
=e^{-itu}\vp(e^{it(\hQ_1-\hQ_{2})})=1.
$$
Thus, $\vp$ is an EPR state for $(Q_1,u1+Q_2)$, and similarly
it can be shown that it is an EPR state for $(P_1,v1-P_2)$.
\end{proof}

According to Bohr, a measurement on the first particle in an EPR state influences the condition which defines elements of reality for the second particle. The following theorem 
mathematically reconstructs Bohr's reply to EPR.

\begin{theorem}
\label{theorem9}
Let $\vp$ be the canonical  $(u,v)$-EPR state of $\cA$ for $u,v\in\R$.
Then,  neither $P_1$ nor $P_2$ can be an observable of any beable subalgebra
$\cB$ for $(\vp,Q_1)$.
\end{theorem}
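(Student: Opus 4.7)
The plan is to reduce Theorem~\ref{theorem9} to Theorem~\ref{theorem5}. The proposition immediately preceding already identifies the canonical $(u,v)$-EPR state $\vp$ as an EPR state for both commuting pairs $(Q_1,\,u1+Q_2)$ and $(P_1,\,v1-P_2)$. Taking $\cA_1=\cA[\R^2]\otimes\C I$, $\cA_2=\C I\otimes\cA[\R^2]$ and the observables $A_1=Q_1$, $A_2=u1+Q_2$, $B_1=P_1$, $B_2=v1-P_2$, what remains is to verify that these pairs are incommensurable from $(\cA_1,\cA_2)$ for $\vp$ and then to convert the conclusion about $v1-P_2$ into one about $P_2$.

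Incommensurability is a direct computation from the Weyl form of the CCR. Using $[\hQ,\hP]=i$ one obtains $e^{it\hQ}e^{is\hP}=e^{-its}e^{is\hP}e^{it\hQ}$, so
\[
|[U^{Q_1}_t,U^{P_1}_s]|^2=|e^{-its}-1|^2\,I=(2-2\cos ts)\,I,
\]
whose $\vp$-expectation $2-2\cos ts$ is nonzero for, e.g., $t=s=1$. For the second pair, the scalar factors $e^{itu}$ and $e^{isv}$ coming from $u1$ and $v1$ commute out of the commutator and the identical calculation yields $\vp(|[U^{u1+Q_2}_t,U^{v1-P_2}_s]|^2)=2-2\cos ts\neq 0$. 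Hence both pairs are incommensurable in the sense of the definition preceding Theorem~\ref{theorem5}.

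Theorem~\ref{theorem5} then immediately delivers the conclusion that neither $P_1$ nor $v1-P_2$ can be an observable of any beable subalgebra $\cB$ for $(\vp,Q_1)$. This settles $P_1$ at once. For $P_2$, I would close the argument by noting that the observable $v1-P_2$ sends each character $\ch_t\in\AP(\R)$ to $U^{v1-P_2}_t=e^{itv}U^{P_2}_{-t}$, so it generates exactly the same C*-subalgebra as $P_2$; that is, $\CC(v1-P_2)=\CC(P_2)$. Consequently $v1-P_2\in\cO(\cB)$ if and only if $P_2\in\cO(\cB)$, and the exclusion of the former forced by Theorem~\ref{theorem5} excludes $P_2$ as well.

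The only subtle point, and hence the main bookkeeping obstacle, is keeping the shifts $u1$ and $v1$ under control in the two places they enter: in the incommensurability computation, where the phases must be seen to drop out of the commutator, and in the final step, where one must distinguish the observable $v1-P_2$ (as a *-homomorphism $\AP(\R)\to\cA$) from the operator $vI-\hP_2$ in order to identify $\CC(v1-P_2)$ with $\CC(P_2)$. Once these are handled cleanly, Theorem~\ref{theorem9} is essentially a one-line corollary of Theorem~\ref{theorem5} and the preceding proposition.
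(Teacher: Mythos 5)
Your proposal is correct, but it routes the argument differently from the paper. The paper's own proof does not invoke Theorem~\ref{theorem5} at all: it applies Theorem~\ref{th:101029a} directly with the specific element $Z=e^{i\pi P_2}$, obtaining $\vp(|[e^{iQ_2},e^{i\pi P_2}]|^2)=0$, and then uses the Weyl relation at the particular parameters $t=1$, $s=\pi$ (where the phase is exactly $-1$) to derive $\vpi(e^{iQ_2}e^{i\pi P_2})\Om=-\vpi(e^{iQ_2}e^{i\pi P_2})\Om$ in the GNS representation, contradicting unitarity; the exclusion of $P_1$ is dispatched by noting it fails to commute with $Q_1$. Your reduction to Theorem~\ref{theorem5} is legitimate: the incommensurability computation $\vp(|[U^{Q_j}_t,U^{P_j}_s]|^2)=2-2\cos ts$ is right (the scalar phases from $u1$ and $v1$ do drop out of $|[\cdot,\cdot]|^2$ since the commutator is a unimodular multiple of a unitary), and the identification $\CC(v1-P_2)=\CC(P_2)$ correctly converts the exclusion of $v1-P_2$ into the exclusion of $P_2$, since membership in $\cO(\cB)$ is determined by the range of the observable. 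Since Theorem~\ref{theorem5} is itself proved from Theorem~\ref{th:101029a}, the two arguments rest on the same mechanism; what your version buys is uniformity (Theorem~\ref{theorem9} becomes a corollary of the general incommensurability theorem), at the cost of the extra bookkeeping you flag, while the paper's version is more self-contained and exhibits the contradiction concretely as a sign flip of a single GNS vector.
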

\begin{proof}
Let $\cB$ be a beable subalgebra for $(\vp,Q_1)$.
Since $P_1$ does not commute with $Q_1$, it is obvious that
$P_1\not\in\cO(\cB))$.
Suppose $P_2\in\OO(\cB)$.
Since $\vp$ is an EPR state for $(Q_1,uI+Q_2)$ and $\cB$ is beable
for $(\vp,Q_1)$,  from Theorem \ref{th:101029a} with 
$Z=e^{i\pi P_2}\in\cB$ we have 
$$
\vp(|[e^{iQ_2},e^{i\pi P_2}]|^2)=0.
$$
Let $(\cH,\vpi,\Om)$ be the GNS construction of $\cA$ induced by $\vp$.  
Then, we have
$$
\vpi(e^{iQ_2}e^{i\pi P_2})\Om=\vpi(e^{i\pi P_2}e^{iQ_2})\Om.
$$
By the Weyl commutation relation, we have
$$
\vpi(e^{iQ_2}e^{i\pi P_2})\Om=-\vpi(e^{i\pi P_2}e^{iQ_2})\Om.
$$
Thus, we have
$$
2\vpi(e^{iQ_2}e^{i\pi P_2})\Om=0.
$$
This is a contradiction,
since $\vpi(e^{iQ_2}e^{i\pi P_2})$ is unitary,
so that we conclude $P_2\not\in\OO(\cB)$.
\end{proof}

\section{Concluding Remarks}
\sloppy
Let $\mathcal{O}_1$ and $\mathcal{O}_2$ be strictly space-like separated regions in Minkowski space. Then there exists 
a vector state $\vp$ of $\mathfrak{N}(\mathcal{O}_1) \vee \mathfrak{N}(\mathcal{O}_2)$ which is an EPR state for incommensurable
pairs $(E_1,E_2)$ and $(F_1,F_2)$ of projections from 
$(\mathfrak{N}(\mathcal{O}_1), \mathfrak{N}(\mathcal{O}_2))$ by Theorem \ref{theorem1}. If we were to measure $E_1$ in $\mathcal{O}_1$ in $\vp$, we could predict with certainty the outcome of $E_2$; and if we were to measure $F_1$ in $\mathcal{O}_1$ in $\psi$, we could predict with certainty the outcome of $F_2$.

If we accepted the criterion of reality which was proposed by EPR, $E_2$ and $F_2$ would
simultaneously be elements of reality.  However, quantum theory has
no theoretical counter part of them, since they do not commute. 
Therefore algebraic quantum field theory would be incomplete with this criterion.

On the other hand, Bohr proposed another criterion of reality. 
According to him, a measurement in $\mathcal{O}_1$ influences the condition which defines elements of reality in $\mathcal{O}_2$. 
Howard \cite{How94} reconstructed Bohr's reply to EPR 
in terms of Bohr's notion of  ``a classical description'' and ``a contextualized
version of the EPR reality criterion''. According to this criterion, ``our decision as to which particular properties to consider as real will turn on the question of predictability with certainty'' \cite[p.256]{How79}. Therefore $F_1$ is an element of reality if $E_1$ is measured in $\vp$. Since $F_1$ and $F_2$ cannot be ascribed simultaneous reality, $F_2$ is not an element of reality.

Halvorson and Clifton showed that Bohr's reply to EPR is consistent with the following requirements \cite[pp.16-17]{HC02}: (1) Empirical Adequacy: When an observable is measured, it possesses a determinate value distributed in accordance with the probabilities determined by the quantum state. (2) Classical Description: Properties $P$ and $P'$ can be simultaneously real in a quantum state only if that state can be represented as a joint classical probability distribution over $P$ and $P'$. (3) Objectivity: Elements of reality must be invariants of those symmetries that preserve the defining features of the measurement context.
For the EPR position-momentum case 
they proved the consistency between those two requirements
\cite[Theorem 2]{HC02},
by showing that given the measurement context $(\vp,Q_1)$ with an EPR state and the 
position $Q_1$ of the particle I, $Q_1$  has the value as an element of reality, 
the value of $P_1$ might be ascribed reality, but the value of $P_2$ cannot be
ascribed reality. 
In Theorem \ref{theorem9} we have reconstructed this consistency theorem
through an rather elementary arguments in a general formulation of algebraic quantum
theory.

It is shown that Howard's contextualized reality criterion is consistent with the above three
requirements in Theorems \ref{theorem2} and \ref{theorem3} 
for the case of bounded observables in 
algebraic quantum field theory, and in Theorems \ref{theorem4} and \ref{theorem5} 
for the case of unbounded observables.
Thus, Howard's reconstruction of Bohr's reply based on a contextualized
version of the EPR reality criterion has now acquired independent grounds 
from the objectivity requirement represented by an invariance principle
under relevant symmetries even in a general formulation of algebraic
quantum theory.

\section*{Acknowledgements}
The authors thank Jeremy Butterfield for helpful comments and suggestions
for an earlier version of this paper.
M.O. thanks Andreas Doering for his warm hospitality at the Department 
of Computer Science, Oxford University, where the final part of this work 
has been done.
M.O. is supported in part by the JSPS KAKENHI,
No.21244007 and No.22654013.
Y.K. is supported by the JSPS KAKENHI, No.23701009.

\end{document}